\theoremstyle{plain}
\newtheorem{lemma}{Lemma}
\newtheorem*{corollary*}{Corollary}
\newtheorem{theorem}{Theorem}
\theoremstyle{definition}
\newtheorem{definition}{Definition}
\author{Vladimir Lysikov\footnote{E-mail: \texttt{lysikov-vv@yandex.ru}}\\{\small Moscow State University}}
\title{On bilinear algorithms for multiplication in quaternion algebras}
\date{}
\begin{document}

\maketitle


\begin{abstract}
We show that the bilinear complexity of multiplication in a non-split quaternion algebra over a field of characteristic distinct from $2$ is $8$. This question is motivated by the problem of characterising algebras of almost minimal rank studied in [1]

This paper is a translation of a report submitted by the author to the XI international seminar "Discrete mathematics and its applications".
\end{abstract}



\begin{definition}
	A sequence $(f_1,g_1,z_1;\dots;f_r,g_r,z_r)$ with $f_k\in U^{*}, g_k\in V^{*}, z_k\in W$ is called a \emph{bilinear algorithm} of length $r$ for a bilinear mapping $\varphi\colon U\times V\to W$ if
	\begin{equation}
		\varphi(u, v) = \sum_{k = 1}^r f_k(u)g_k(v)z_k\quad \forall u\in U, v\in V.
	\end{equation}

	The minimal length of a bilinear algorithm for $\varphi$ is called the \emph{bilinear complexity}, or the \emph{rank}, of $\varphi$ and is denoted by $R(\varphi)$. Algorithms of minimal length are called optimal. 
\end{definition}


Rank of bilinear map has a useful algebraic interpretation. A bilinear map can be thought of as a tensor in $U^{*}\otimes V^{*}\otimes W$. Bilinear algorithms are decompositions of this tensor and the bilinear complexity coincides with the tensor rank as defined in multilinear algebra.


\begin{definition}
	A quaternion algebra over field $F$ of characteristic $\mathop{\mathrm{char}} F\neq 2$ is a $4$-dimensional algebra generated by two elements $i, j$ with identities
	\begin{equation}
		i^2 = p,\ j^2 = q,\ ij = -ji
	\end{equation}
	for some $p,q\in F^{\times}$.
\end{definition}


Any quaternion algebra over $F$ is either isomorphic to the matrix algebra $F^{2\times 2}$ (in this case it is called split) or a noncommutative division algebra. It is known that $R(F^{2\times 2}) = 7$ and $R(H)\geq 8$ for non-split quaternion algebra $H$ [3].


Let $m = \dim U$, $n = \dim V$. We will prove some general results on algorithms of rank $m + n$ for some class of bilinear maps.


\begin{definition}
	We say that an element $u_0\in U$ is (left) $\varphi$-\emph{regular} if the linear map $\varphi(u_0,\cdot)$ is an injection, i. e., if
	\begin{equation}
		\varphi(u_0, v) = 0 \Leftrightarrow v = 0.
	\end{equation}
\end{definition}


\begin{definition}
	A bilinear algorithm $\varphi = \sum_{k = 1}^r f_k\otimes g_k\otimes z_k$ is called \emph{two-component} if the set $\{1,\dots,r\}$ of indices can be partitioned into two sets $I$ and $J$ such that $\{f_i | i\in I\}$ and $\{g_j | j\in J\}$ are bases of $U^{*}$ and $V^{*}$ respectively.
\end{definition}


\begin{lemma}
	If $R(\varphi) = m + n$, $\mathop{\mathrm{lker}} \varphi = \mathbf{0}$ and every basis of $U$ contains a $\varphi$-regular element then all optimal bilinear algorithms for $\varphi$ are two-component.
\end{lemma}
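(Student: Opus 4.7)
The plan is a proof by contradiction. First, both $\{f_k\}$ and $\{g_k\}$ span $U^{*}$ and $V^{*}$ respectively: the former is immediate from $\mathop{\mathrm{lker}}\varphi=\mathbf{0}$, and the latter follows because any $\varphi$-regular $u_0$ (existing by the basis hypothesis) makes $\varphi(u_0,\cdot)$ injective, forcing $\mathop{\mathrm{rker}}\varphi=\mathbf{0}$. Optimality additionally yields $f_k, g_k, z_k \ne 0$ for every $k$.

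I then choose $I\subseteq\{1,\ldots,m+n\}$ with $|I|=m$ so that $\{f_i\}_{i\in I}$ is a basis of $U^{*}$, and let $J$ be the complement of size $n$. The aim is to show that $\{g_j\}_{j\in J}$ is a basis of $V^{*}$, possibly after one matroid exchange on $I$. Suppose instead that $\{g_j\}_{j\in J}$ is linearly dependent, and set $V_0 := \bigcap_{j\in J}\ker g_j$, $d := \dim V_0 \ge 1$. Taking the dual basis $\{u_i\}_{i\in I}$ of $U$ and writing $f_j = \sum_{i\in I} c_{ji}f_i$ for $j \in J$, one computes
\[
\varphi(u_i, v) = g_i(v)\,z_i + \sum_{j\in J} c_{ji}\,g_j(v)\,z_j,
\]
which reduces to $g_i(v_0)\,z_i$ on $v_0\in V_0$.

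If $d\ge 2$, then for every $i\in I$ the restriction $g_i|_{V_0}$ has a nontrivial kernel in $V_0$, so some $v_0\in V_0\setminus\mathbf{0}$ gives $g_i(v_0)=0$ and hence $\varphi(u_i,v_0)=0$; thus every $u_i$ is non-regular and the basis $\{u_i\}_{i\in I}$ contains no regular element, contradicting the hypothesis. The case $d=1$ requires a matroid exchange: set $I_1 := \{i\in I : g_i\notin\mathrm{span}\{g_j\}_{j\in J}\}$, which is nonempty because $\{g_k\}$ spans $V^{*}$. If there exist $j_1\in J$ and $i^{\circ}\in I_1$ with $c_{j_1,i^{\circ}}\ne 0$ (with $j_1$ chosen so that $g_{j_1}$ lies in the span of the other $g_j$'s), then $I':=(I\setminus\{i^{\circ}\})\cup\{j_1\}$ is a basis of $U^{*}$ whose complement $J'$ yields a basis of $V^{*}$---so the algorithm is two-component for $I'$. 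Otherwise $c_{j,i}=0$ for all $j\in J$, $i\in I_1$, which forces $\varphi(u_i,v)=g_i(v)z_i$ for $i\in I_1$; then each such $u_i$ is non-regular via any nonzero $v\in\ker g_i$ (nonempty because $n\ge 2$), while $u_i$ for $i\in I\setminus I_1$ is non-regular via $v_0\in V_0$. Again we obtain a basis of $U$ consisting of non-regular elements, a contradiction. The degenerate case $n=1$ is immediate: dependence of $\{g_j\}_J=\{g_{j_1}\}$ would force $g_{j_1}=0$, forbidden by optimality.

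The main obstacle is the $d=1$ sub-case, where the matroid-exchange argument must be combined with a careful structural analysis of the ``bad'' no-swap configuration to show that it is incompatible with the regularity hypothesis.
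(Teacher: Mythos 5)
Your overall strategy is sound and runs closely parallel to the paper's: pick a basis $\{f_i\}_{i\in I}$ of $U^{*}$, ask whether the complementary $\{g_j\}_{j\in J}$ is a basis of $V^{*}$, and if not, either perform a single exchange or show that the entire dual basis $\{u_i\}_{i\in I}$ consists of non-regular elements, contradicting the hypothesis. Your $d\ge 2$ case and the exchange computation are correct. (The paper organizes this slightly differently: it first singles out a $\varphi$-regular element $u_1$ of the dual basis, and regularity of $u_1$ forces $d\le 1$ from the outset, so it never meets your $d\ge 2$ case; but that is a cosmetic difference.)

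There is, however, a genuine logical gap in the $d=1$ sub-case. Let $S\subseteq J$ be the support of the (unique up to scalar) linear dependence among $\{g_j\}_{j\in J}$; these are exactly the indices $j_1$ for which $g_{j_1}$ lies in the span of the remaining $g_j$, and $|S|\ge 2$ because every $g_j$ is nonzero. Your exchange needs $j_1\in S$ \emph{and} $i^{\circ}\in I_1$ with $c_{j_1,i^{\circ}}\ne 0$, so the correct negation is only that $c_{j,i}=0$ for all $j\in S$ and $i\in I_1$ --- not for all $j\in J$, as you assert. Consequently the claim that this ``forces $\varphi(u_i,v)=g_i(v)z_i$ for $i\in I_1$'' does not follow: the terms $\sum_{j\in J\setminus S}c_{j,i}\,g_j(v)\,z_j$ survive, and the configuration where $c_{j,i}\ne 0$ for some $j\in J\setminus S$, $i\in I_1$ is left unhandled. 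The argument can be repaired: for $i\in I_1$ one has
\begin{equation*}
\varphi(u_i,v)=g_i(v)z_i+\sum_{j\in J\setminus S}c_{j,i}\,g_j(v)\,z_j,
\end{equation*}
which vanishes on $\ker g_i\cap\bigcap_{j\in J\setminus S}\ker g_j$, a subspace of dimension at least $n-1-(n-|S|)=|S|-1\ge 1$; hence each such $u_i$ is still non-regular, and together with the indices $i\in I\setminus I_1$ (killed by any nonzero $v_0\in V_0$) you again obtain a basis of $U$ with no regular element. This repaired step is exactly the paper's closing argument, where the contradiction comes from a nonzero $v\in\ker g_1\cap\bigcap_{k=m+s+1}^{m+n}\ker g_k$. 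As written, though, your case analysis is not exhaustive and the proof does not go through.
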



\begin{proof}
Let $\sum\limits_{k = 1}^{m + n} f_k\otimes g_k\otimes z_k$ be an optimal bilinear algorithm for $\varphi$. Since $\mathop{\mathrm{lker}} \varphi = \mathbf{0}$, the functionals $f_1,\dots,f_r$ span $U^{*}$. W.l.o.g. let $f_1,\dots,f_m$ be a basis and $u_1$ be a $\varphi$-regular element of the dual basis $u_1,\dots,u_n$. Since $u_1$ is regular, functionals $g_1,g_{m + 1},\dots,g_{m + n}$ span $V^{*}$.

Case 1. $g_{m + 1},\dots, g_{m + n}$ are linearly independent. Then $I = \{1,\dots, m\}$ and $J = \{m + 1,\dots,m+n\}$ form a partition of $\{1,\dots,m+n\}$ required by definition of two-component algorithm.

Case 2. $g_{m + 1},\dots, g_{m + n}$ are linearly dependent. In this case $\dim  \mathrm{lin}(\{g_{m + 1},\dots, g_{m + n}\}) = n - 1$ and there is a unique (up to a constant) linear dependence of $g_{m + 1},\dots, g_{m + n}$. W.l.o.g. assume that $\sum_{k = 1}^s c_k g_{n + k} = 0$ where all the coefficients $c_k$ are nonzero.

If there is an index $p$ such that $m + 1\leq p\leq m + s$ and $f_p$ has a nonzero first coordinate in basis $f_1,\dots,f_m$ then $I = \{2,\dots,m,p\}$ and $J = \{1,m + 1,\dots, m + n\}\setminus \{p\}$ form a required partition. If all $f_{m + 1},\dots,f_{m + s}$ have zero first coordinates, then we have a contradiction with $\varphi$-regularity of $u_1$, since $\varphi(u_1,v) = 0$ for a nonzero $v\in \ker g_1\cap \bigcap_{k = n + s + 1}^{m + n} \ker g_k$.
\end{proof}


\begin{lemma}
	A two-component bilinear algorithm for $\varphi$ exists if and only if there are bases $(u_1,\dots,u_m)$ and $(v_1,\dots,v_m)$ of spaces $U$ and $V$ resp. and collections $(z'_1,\dots,z'_m)$, $(z''_1,\dots,z''_n)$ with $z'_i, z''_j\in W$ such that the following condition holds:
	\begin{equation}
	\varphi(u_i, v_j)\in \mathrm{lin}(\{z'_i, z''_j\}).
	\end{equation}
\end{lemma}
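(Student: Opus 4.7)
The plan is to prove both implications by unpacking what a two-component algorithm says on a dual basis, and then reversing that construction.

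\textbf{Forward direction.} Assume a two-component algorithm $\varphi = \sum_{k=1}^{m+n} f_k\otimes g_k\otimes z_k$ with partition $I,J$ of $\{1,\dots,m+n\}$; relabel so that $I = \{1,\dots,m\}$ with $(f_1,\dots,f_m)$ a basis of $U^{*}$ and $J = \{m+1,\dots,m+n\}$ with $(g_{m+1},\dots,g_{m+n})$ a basis of $V^{*}$. Take $(u_1,\dots,u_m)$ and $(v_1,\dots,v_n)$ to be the respective dual bases, so $f_i(u_{i'}) = \delta_{ii'}$ and $g_{m+j}(v_{j'}) = \delta_{jj'}$. Evaluating the algorithm on $(u_i, v_j)$ collapses each of the two sums to a single term, giving
\begin{equation}
\varphi(u_i, v_j) = g_i(v_j)\, z_i + f_{m+j}(u_i)\, z_{m+j}.
\end{equation}
Setting $z'_i := z_i$ and $z''_j := z_{m+j}$ yields the required containment in $\mathrm{lin}(\{z'_i, z''_j\})$.

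\textbf{Backward direction.} Given bases $(u_i)$, $(v_j)$ and vectors $(z'_i)$, $(z''_j)$ with $\varphi(u_i, v_j) = \alpha_{ij} z'_i + \beta_{ij} z''_j$, one reverses the calculation above. Let $(f_1,\dots,f_m)$ be the basis of $U^{*}$ dual to $(u_i)$, and $(g_{m+1},\dots,g_{m+n})$ the basis of $V^{*}$ dual to $(v_j)$. Use the bases $(v_j)$ and $(u_i)$ to define the remaining functionals by $g_i(v_j) := \alpha_{ij}$ for $i = 1,\dots,m$ and $f_{m+j}(u_i) := \beta_{ij}$ for $j = 1,\dots,n$, and put $z_i := z'_i$, $z_{m+j} := z''_j$. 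A direct substitution of $(u_i, v_j)$ into $\sum_{k=1}^{m+n} f_k\otimes g_k\otimes z_k$ reproduces $\alpha_{ij} z'_i + \beta_{ij} z''_j = \varphi(u_i, v_j)$; since both sides are bilinear and they agree on a product of bases, they agree on all of $U\times V$, and the algorithm is two-component by construction with $I = \{1,\dots,m\}$, $J = \{m+1,\dots,m+n\}$.

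There is no real obstacle here: the statement is essentially an unwinding of definitions via duality. The only point that needs minor care is the index bookkeeping and the verification that the functionals $g_i$ and $f_{m+j}$ in the backward direction can indeed be freely prescribed on the chosen bases, which is immediate since $(v_j)$ and $(u_i)$ are bases of $V$ and $U$.
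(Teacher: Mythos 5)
Your proof is correct and follows essentially the same route as the paper: both directions amount to evaluating the two-component decomposition on the dual bases of $(f_i)_{i\in I}$ and $(g_j)_{j\in J}$, which collapses each sum to a single term, and reversing this by prescribing the remaining functionals on those bases. The paper merely packages the same computation by writing the algorithm in the normal form $\sum_i f_i\otimes(\sum_j\lambda_{ij}g_j)\otimes z'_i+\sum_j(\sum_i\mu_{ij}f_i)\otimes g_j\otimes z''_j$, with your $\alpha_{ij},\beta_{ij}$ playing the roles of $\lambda_{ij},\mu_{ij}$.
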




\begin{proof}
Any two-component bilinear algorithm for $\varphi$ can be written as
\begin{equation}
\label{eqn5}
\varphi = \sum_{i = 1}^m f_i\otimes (\sum_{j = 1}^n \lambda_{ij} g_j)\otimes z'_i + \sum_{j = 1}^n (\sum_{i = 1}^m \mu_{ij} f_i)\otimes g_j\otimes z''_j,
\end{equation}
where $(f_i)$ and $(g_j)$ are bases of $U^{*}$ and $V^{*}$ respectively. Let $(u_i)$ and $(v_j)$ be the bases dual to $(f_i)$ and $(g_j)$. Then it follows that
\begin{equation}
\label{eqn6}
\varphi(u_i, v_j) = \lambda_{ij} z'_i + \mu_{ij} z''_j.
\end{equation}

Conversely, if \eqref{eqn6} holds then \eqref{eqn5} is a two-component algorithm for $\varphi$.
\end{proof}


\begin{corollary*}
Let $A$ be a local algebra, $\dim A = n$, and $R(A) \geq 2n$. Equality $R(A) = 2n$ holds iff there are bases $(u_1 = 1, u_2,\dots u_n)$, $(v_1 = 1, v_2,\dots v_n)$ and collections $(z'_i, \dots, z'_n)$, $(z''_1,\dots,z''_n)$ such that
\begin{equation}
u_i v_j \in \mathrm{lin}(\{z'_i, z''_j\}).
\end{equation}
\end{corollary*}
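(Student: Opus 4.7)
The plan is to deduce the corollary directly from the two preceding lemmas applied to the multiplication map $\varphi\colon A\times A\to A$, with the extra work concentrated in verifying the hypotheses of Lemma~1 and in arranging the normalisation $u_1=v_1=1$ afterwards. The converse direction is the easy half, since Lemma~2 already provides a two-component algorithm of length $2n$ as soon as suitable bases and $z',z''$ exist, yielding $R(A)\le 2n$ and hence equality together with the hypothesis $R(A)\ge 2n$.

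For the forward direction, assume $R(A)=2n$. First I would check that Lemma~1 applies with $m=n=\dim A$ and $\varphi$ the multiplication. The left kernel is trivial because $\varphi(u,1)=u$, so $\varphi(u,\cdot)=0$ forces $u=0$. The regularity hypothesis follows from the local structure: if $\mathfrak m\subset A$ is the maximal ideal, then any basis of $A$ must contain an element outside $\mathfrak m$ (otherwise the basis would span a proper subspace inside $\mathfrak m$), i.e.\ a unit, and a unit $u_0$ is $\varphi$-regular since $u_0v=0$ implies $v=u_0^{-1}(u_0v)=0$. Consequently Lemma~1 gives that every optimal algorithm is two-component, and Lemma~2 produces bases $(u_i)$, $(v_j)$ of $A$ together with $(z'_i)$, $(z''_j)$ satisfying $u_iv_j\in\mathrm{lin}(\{z'_i,z''_j\})$.

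The main obstacle is the remaining normalisation, namely rescaling the bases so that $u_1=v_1=1$ while preserving the containment relation. Since $A$ is local, the basis $(u_i)$ must contain a unit; after reordering assume $u_1$ is a unit and set $u'_i=u_1^{-1}u_i$, so that $u'_1=1$ and $(u'_i)$ is still a basis. Then
\begin{equation}
u'_iv_j=u_1^{-1}(u_iv_j)\in\mathrm{lin}\bigl(\{u_1^{-1}z'_i,\,u_1^{-1}z''_j\}\bigr),
\end{equation}
so absorbing the factor $u_1^{-1}$ into the $z$'s preserves the Lemma~2 condition. Applying the same argument to the basis $(v_j)$ (again using that every basis of the local algebra contains a unit) and right-multiplying by $v_1^{-1}$ yields new bases with $u'_1=v'_1=1$ and a correspondingly rescaled collection $(\tilde z'_i,\tilde z''_j)$ that still satisfies $u'_iv'_j\in\mathrm{lin}(\{\tilde z'_i,\tilde z''_j\})$. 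This is the statement of the corollary.

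The only subtle point worth double-checking is that the two rescalings do not interfere: the second one multiplies both families $(\tilde z'_i)$ and $(\tilde z''_j)$ uniformly by $v_1^{-1}$ on the right, which is compatible with the first rescaling on the left, so the span condition is preserved throughout. Everything else is a bookkeeping translation between Lemmas~1 and~2 and the language of a local algebra.
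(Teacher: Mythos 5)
Your proof is correct and follows exactly the route the paper intends (the corollary is stated there without proof, as an immediate consequence of Lemmas~1 and~2): you verify the hypotheses of Lemma~1 for a local algebra via the fact that every basis contains a unit, invoke Lemma~2, and handle the normalisation $u_1=v_1=1$ by the left/right rescalings $u_i\mapsto u_1^{-1}u_i$, $v_j\mapsto v_jv_1^{-1}$, which by associativity only rescale the spans $\mathrm{lin}(\{z'_i,z''_j\})$. No gaps.
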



\begin{theorem}
	Let $F$ be a field of characteristic $\mathop{\mathrm{char}} F\neq 2$ and $H$ be a non-split quaternion algebra over $F$. Then $R(H) = 8$.
\end{theorem}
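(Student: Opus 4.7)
The lower bound $R(H) \ge 8$ is supplied by [3]. To establish the matching upper bound, my plan is to apply the Corollary to $A = H$. Because $H$ is a non-split quaternion algebra, every nonzero element is invertible, so $H$ is a local algebra (its unique maximal two-sided ideal is $\{0\}$) of dimension $n = 4$, and the hypothesis $R(A) \ge 2n = 8$ of the Corollary is exactly the bound from [3]. The Corollary therefore reduces the problem to exhibiting bases $(1, u_2, u_3, u_4)$ and $(1, v_2, v_3, v_4)$ of $H$ together with tuples $(z'_1, \dots, z'_4)$, $(z''_1, \dots, z''_4) \subset H$ satisfying $u_i v_j \in \mathrm{lin}\{z'_i, z''_j\}$ for every $(i,j)$; the second lemma then unpacks such data into a two-component bilinear algorithm of length $8$.

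The concrete plan is to work with the standard generators $u_i = v_i \in \{1, i, j, ij\}$. In that basis each entry $u_i v_j$ of the $4 \times 4$ multiplication table is a scalar multiple of one of the basis vectors $1, i, j, ij$, and the assignment of ``directions'' to the $16$ cells is a Latin square (the Cayley table of the Klein four-group). The existence problem thus becomes an incidence question in $\mathbb{P}(H)$: find points $[z'_i]$ and $[z''_j]$ so that for every pair $(i, j)$ the projective line through them passes through the coordinate point corresponding to the direction of $u_i v_j$.

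The main obstacle I expect is that naive choices are inconsistent and force chains of corrections. Taking $z'_i = u_i$ and $z''_j = v_j$ already fails on the diagonal, since $i^2 = p$, $j^2 = q$, $(ij)^2 = -pq$ all lie on $F \cdot 1$ rather than on the coordinate axes $F \cdot u_i$; and any attempt to give $z'_2$ a nonzero $ij$-component in order to cover $u_2 v_3 = ij$ ruins the condition $u_2 v_1 = i \in \mathrm{lin}\{z'_2, z''_1\}$ unless $z''_1$ is also adjusted, after which the correction propagates through the first row. Balancing these interlocking constraints is the technical heart of the proof: each $z'_i$ and $z''_j$ must be chosen as a suitable $F$-linear combination of several basis vectors, with the bases $(u_i)$ and $(v_j)$ themselves possibly perturbed away from the pure standard basis, and the $16$ containments checked directly. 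Once such a configuration is exhibited over $F$, the Corollary together with the second lemma converts it into a bilinear algorithm of length $8$, which combined with the lower bound gives $R(H) = 8$.
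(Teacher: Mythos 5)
Your reduction is exactly the one the paper uses: $H$ is a division algebra, hence local, $R(H)\geq 8$ comes from [3], and the Corollary (via the second Lemma) reduces everything to exhibiting bases $(1,u_2,u_3,u_4)$, $(1,v_2,v_3,v_4)$ and tuples $(z'_i)$, $(z''_j)$ with $u_iv_j\in\mathrm{lin}(\{z'_i,z''_j\})$. But at that point you stop: you describe the system of $16$ incidence constraints, observe that naive choices fail and that corrections propagate, and then assert that the constraints can be balanced without ever producing a configuration that does so. Since the reduction is already contained in the Lemmas and the Corollary, the exhibition of this configuration is the entire content of the theorem's proof; leaving it as ``the technical heart, to be checked directly'' is a genuine gap, not a detail. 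Note also that no perturbation of the bases is needed --- the standard basis $(1,i,j,k)$ with $k=ij$ works as is.

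The configuration that closes the gap (the one in the paper) is the following. Fix $\alpha,\beta,\gamma\in F^{\times}$ and let $\varepsilon^{(i)}=(\varepsilon^{(i)}_1,\varepsilon^{(i)}_2,\varepsilon^{(i)}_3)$ run over the four sign patterns $(+,+,+)$, $(+,-,-)$, $(-,+,-)$, $(-,-,+)$ (the characters of the Klein four-group $\{1,i,j,k\}/\{\pm1\}$, ordered so that $\chi_i$ is positive on $u_i$); set
\begin{equation*}
z'_i = 1 + \varepsilon^{(i)}_1\alpha i + \varepsilon^{(i)}_2\beta j + \varepsilon^{(i)}_3\gamma k,\qquad z''_j = 2 - z'_j .
\end{equation*}
Then for $i=j$ the sum $z'_i+z''_i=2\cdot 1$ spans the line of $u_iv_i\in F^{\times}\cdot 1$, and for $i\neq j$ the difference $z'_i-z''_j=z'_i+z'_j-2$ kills the identity component and, because $\chi_i\chi_j$ is a nontrivial character, also kills two of the three imaginary components, leaving a nonzero multiple ($\pm 2\alpha i$, $\pm 2\beta j$ or $\pm 2\gamma k$, using $\alpha,\beta,\gamma\neq 0$ and $\mathrm{char}\,F\neq 2$) of exactly the basis direction in which $u_iv_j$ lies; this is where the Latin-square structure you correctly identified is used. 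With this explicit choice all $16$ containments hold and your argument is complete; without it, the proof is only a restatement of what must be found.
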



\begin{proof}
Apply the preceding corollary with the following bases and collections:
\addtocounter{equation}{1}
\begin{equation}
(u_1, u_2, u_3, u_4) = (v_1,v_2,v_3,v_4) = (1, i, j, k)\tag{\theequation a}
\end{equation}
\begin{equation}
(z'_1, z'_2, z'_3, z'_4){=}(1 + \alpha i + \beta j + \gamma k, 1 + \alpha i - \beta j - \gamma k, 1 - \alpha i + \beta j - \gamma k, 1 - \alpha i - \beta j + \gamma k)\tag{\theequation b}
\end{equation}
\begin{equation}
(z''_1, z''_2, z''_3, z''_4){=}(1 - \alpha i - \beta j - \gamma k, 1 - \alpha i + \beta j + \gamma k, 1 + \alpha i - \beta j + \gamma k, 1 + \alpha i + \beta j - \gamma k)\tag{\theequation c}
\end{equation}
where $\alpha,\beta,\gamma\in F$, $\alpha\neq 0$, $\beta \neq 0$, $\gamma \neq 0$.
\end{proof}



By considering bilinear algorithms arising from different constants $\alpha,\beta,\gamma$ we also managed to prove the following fact about equivalence of optimal algorithms in the sense of de Groote [2].

\begin{theorem}
	Let $F$ be a field of characteristic $\mathop{\mathrm{char}} F\neq 2$ and $H$ be a non-split quaternion algebra over $F$. There are infinitely many de Groote equvalence classes of optimal bilinear algorithms for the multiplication in $H$.
\end{theorem}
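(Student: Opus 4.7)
The plan is to exhibit an $F$-valued invariant of bilinear algorithms for multiplication in $H$ that is preserved by de Groote equivalence and varies nontrivially on the explicit $3$-parameter family $\mathcal A(\alpha,\beta,\gamma)$ constructed in the proof of the preceding theorem. Since non-split quaternion algebras do not exist over finite fields (by Wedderburn's little theorem), $F$ is infinite, so a nonconstant rational invariant will automatically take infinitely many values.

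First I would describe the symmetry group of the multiplication tensor $m\colon H\times H\to H$. Setting $x=1$ and $y=1$ in the defining identity $L(x)R(y)=M(xy)$ of a symmetry $(L,R,M)$ shows that every such triple has the form $L(x)=a\psi(x)$, $R(y)=\psi(y)b$, $M(z)=a\psi(z)b$ for some $a,b\in H^{\times}$ and some $\psi\in\mathop{\mathrm{Aut}}(H)$ (inner by Skolem--Noether), possibly composed with the involutive symmetry obtained by swapping the two $H^{*}$-factors and applying the quaternion conjugation $z\mapsto\bar z$ (a symmetry because $\overline{xy}=\bar y\bar x$). In particular each of the eight $z$-components of an algorithm transforms as $z\mapsto a\psi(z)b$ or $z\mapsto\bar z$.

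Since $N(a)\in F$ is central and $\psi$, being inner, commutes with the involution, a direct computation gives
\[
N(a\psi(z)b)=N(a)N(b)N(z),\qquad B(a\psi(z)b,a\psi(z')b)=N(a)N(b)B(z,z'),
\]
where $N$ is the reduced norm and $B(x,y)=\mathop{\mathrm{tr}}(x\bar y)$ its polarisation; both $N$ and $B$ are preserved by conjugation. Therefore the ratio
\[
I(z_k,z_l)=\frac{B(z_k,z_l)^2}{N(z_k)\,N(z_l)}\in F
\]
is fixed by the full symmetry group and, being homogeneous of degree $0$ in each $z_k$, also by the rescaling ambiguity $z_k\mapsto\mu_k z_k$ inherent in writing the decomposition. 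The unordered multiset $\mathcal I(\mathcal A)=\{I(z_k,z_l)\mid 1\le k<l\le 8\}$ is then a genuine de Groote invariant of an algorithm $\mathcal A$.

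Finally I would evaluate $\mathcal I$ on $\mathcal A(\alpha,\beta,\gamma)$. A short computation yields that all eight $z$-components have equal norm $N=1-p\alpha^2-q\beta^2+pq\gamma^2$ and, for example, $B(z'_1,z'_2)=2(1-p\alpha^2+q\beta^2-pq\gamma^2)$, so $\mathcal I$ contains the element
\[
\frac{4(1-p\alpha^2+q\beta^2-pq\gamma^2)^2}{(1-p\alpha^2-q\beta^2+pq\gamma^2)^2},
\]
a nonconstant rational function of $\beta$ for generically fixed $\alpha,\gamma$. If only finitely many multisets arose as $\beta$ ranged over $F^{\times}$, their union would be a finite subset of $F$ containing this expression for every $\beta$---impossible, since a nonconstant rational function on an infinite field has infinite image. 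The main obstacle is the bookkeeping needed to verify invariance of $I$ under every component of the symmetry group, especially the swap-plus-conjugation piece; this reduces to the two displayed identities together with the symmetry of $B$ and the equality $B(\bar x,\bar y)=B(x,y)$.
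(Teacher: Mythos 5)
Your argument is correct, and it is worth noting that the paper itself offers no proof of this theorem at all --- it only asserts that the algorithms arising from different constants $\alpha,\beta,\gamma$ realize infinitely many classes. Your proposal supplies exactly the missing content: the identification of the isotropy group of the multiplication tensor (the substitution $x=1$, $y=1$ in $L(x)R(y)=M(xy)$ correctly forces the sandwich form $M(z)=a\psi(z)b$ with $\psi$ an automorphism, and the swap-plus-conjugation piece is handled separately), and a projective invariant $B(z_k,z_l)^2/\bigl(N(z_k)N(z_l)\bigr)$ of the $z$-components that survives both the sandwich action and the scaling ambiguity of rank-one summands. The verification that $N(a\psi(z)b)=N(a)N(b)N(z)$ and $B(a\psi(z)b,a\psi(z')b)=N(a)N(b)B(z,z')$ is routine (automorphisms commute with the canonical involution because they preserve the reduced trace), and the well-definedness of the invariant uses that $H$ is a division algebra, so $N(z_k)\neq 0$. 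Two small points strengthen your write-up: the nonconstancy in $\beta$ needs $1-p\alpha^2\neq 0$, which in fact holds for \emph{every} $\alpha$ since $p$ is a nonsquare in a non-split algebra, so no genericity assumption is needed; and you should cite de Groote [2] for the precise definition of the equivalence group to confirm that your list of symmetries (sandwiches, automorphisms, and the transpose combined with the involution) exhausts it --- your invariant is safe in any case, since it is preserved by all of these.
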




The author is grateful to Prof.~V.~B.~Alekseyev for his attention to this research.

This research was supported by RFBR grant 12-01-91331-DFG-a.

\smallskip
{\bf Bibliography}

\begin{enumerate}
\item M.~Bl\"{a}ser, A.~M.~de~Voltaire. Semisimple algebras of almost minimal rank over the reals // Theoretical Computer Science.~--- 2009.~--- vol.~410, no.~50.~--- pp.~5202--5214.
\item H.~F.~de Groote. On varieties of optimal algorithms for the computation of bilinear mappings I // Theoretical Computer Science.~--- vol.~7, no.~1.~--- pp.~1--24.
\item P.~B\"{u}rgisser,M.~Clausen,M.~A.~Shokrollahi. Algebraic complexity theory.~---Springer~Verlag, 1997.
\end{enumerate}
\end{document}